\begin{document}

\title{Can One Achieve Multiuser Diversity in Uplink Multi-Cell Networks?}
\author{\large Won-Yong Shin, \emph{Member}, \emph{IEEE}, Dohyung Park, and Bang Chul Jung, \emph{Member}, \emph{IEEE} \\
\thanks{This research was supported by Basic Science Research
Program through the National Research Foundation of Korea (NRF)
funded by the Ministry of Education, Science and Technology
(2010-0011140).}
\thanks{W.-Y. Shin is with the Division of
Mobile Systems Engineering, College of International Studies,
Dankook University, Yongin 448-701, Republic of Korea (E-mail:
wyshin@dankook.ac.kr).}
\thanks{D. Park was with Samsung Advanced Institute of Technology, Samsung Electronics Co., Ltd., Yongin 446-712, Republic of Korea. He is now with the Department of Electrical and Computer Engineering, The University of Texas at Austin, Austin, TX 78712 USA (E-mail: dhpark@utexas.edu).}
\thanks{B. C. Jung (corresponding author) is with the Department of
Information and Communication Engineering \& Institute of Marine
Industry, Gyeongsang National University, Tongyeong 650-160,
Republic of Korea (E-mail: bcjung@gnu.ac.kr).}
        } \maketitle


\markboth{To Appear in IEEE Transactions on Communications} {Shin,
Park, and Jung: Can One Achieve Multiuser Diversity in Uplink
Multi-Cell Networks?}


\newtheorem{definition}{Definition}
\newtheorem{theorem}{Theorem}
\newtheorem{lemma}{Lemma}
\newtheorem{example}{Example}
\newtheorem{corollary}{Corollary}
\newtheorem{proposition}{Proposition}
\newtheorem{conjecture}{Conjecture}
\newtheorem{remark}{Remark}

\def \diag{\operatornamewithlimits{diag}}
\def \min{\operatornamewithlimits{min}}
\def \max{\operatornamewithlimits{max}}
\def \log{\operatorname{log}}
\def \max{\operatorname{max}}
\def \rank{\operatorname{rank}}
\def \out{\operatorname{out}}
\def \exp{\operatorname{exp}}
\def \arg{\operatorname{arg}}
\def \E{\operatorname{E}}
\def \tr{\operatorname{tr}}
\def \SNR{\operatorname{SNR}}
\def \dB{\operatorname{dB}}
\def \ln{\operatorname{ln}}

\def \be {\begin{eqnarray}}
\def \ee {\end{eqnarray}}
\def \ben {\begin{eqnarray*}}
\def \een {\end{eqnarray*}}

\begin{abstract}
We introduce a distributed opportunistic scheduling (DOS) strategy,
based on two pre-determined thresholds, for uplink $K$-cell networks
with time-invariant channel coefficients. Each base station (BS)
opportunistically selects a mobile station (MS) who has a large
signal strength of the desired channel link among a set of MSs
generating a sufficiently small interference to other BSs. Then,
performance on the achievable throughput scaling law is analyzed. As
our main result, it is shown that the achievable sum-rate scales as
$K\log(\text{SNR}\log N)$ in a high signal-to-noise ratio (SNR)
regime, if the total number of users in a cell, $N$, scales faster
than $\text{SNR}^{\frac{K-1}{1-\epsilon}}$ for a constant
$\epsilon\in(0,1)$. This result indicates that the proposed scheme
achieves the multiuser diversity gain as well as the
degrees-of-freedom gain even under multi-cell environments.
Simulation results show that the DOS provides a better sum-rate
throughput over conventional schemes.
\end{abstract}

\begin{keywords}
Wireless scheduling, inter-cell interference, cellular uplink,
degrees-of-freedom, multi-user diversity.
\end{keywords}

\newpage


\section{Introduction}

Interference between wireless links have been taken into account as
a critical problem in communications. To handle zintra- and
inter-cell interference issues of cellular networks, a simple
infinite cellular multiple-access channel model, referred to as the
Wyner's model, was characterized and then its achievable throughput
performance was analyzed in~\cite{SomekhShamai,LevyShamai}. Even if
the studies in~\cite{SomekhShamai,LevyShamai} lead to a remarkable
insight into complex and analytically intractable practical cellular
environments, the model under consideration is hardly realistic.
Recently, an alternative approach to showing Shannon-theoretic
limits was introduced by Cadambe and Jafar~\cite{Jafar_IA_original},
where interference alignment (IA) was proposed for fundamentally
solving the interference problem when there are multiple
communication pairs. It was shown that the IA scheme can achieve the
optimal degrees-of-freedom, which are equal to $K/2$, in the
$K$-user interference channel with time-varying channel
coefficients. The work~\cite{Jafar_IA_original} has led to
interference management schemes based on IA in various wireless
network environments: multiple-input multiple-output (MIMO)
interference network~\cite{Jafar_IA_MIMO,Jafar_IA_distributed}, X
network~\cite{Jafar_IA_X_channel}, and uplink cellular
network~\cite{Tse_IA, JungShin, OIA_Asilomar10, OIA_Asilomar11,
OIA_TRCOM12, OIA_ISIT12}.

Now we would like to consider realistic uplink networks with $K$
cells, each of which has one base station (BS) and $N$ mobile
stations (MSs). IA for such uplink $K$-cell networks was first
proposed in~\cite{Tse_IA}, but has practical challenges including a
dimension expansion to achieve the optimal degrees-of-freedom.

In the literature, there are some results on the usefulness of
fading in single-cell broadcast channels, where one can exploit a
multiuser diversity gain: opportunistic scheduling~\cite{Knopp_Opp},
opportunistic beamforming~\cite{Viswanath_Opp}, and random
beamforming~\cite{Hassibi_RBF}. In single-cell downlink systems, the
impact of partial channel knowledge at the transmitter has also been
studied by showing whether the multiuser diversity gain can be
achieved through opportunistic scheduling schemes based on limited
feedback~\cite{LiPesaventoGershman,YooJindalGoldsmith}. Moreover,
scenarios obtaining the multiuser diversity have been studied in
cooperative networks by applying an opportunistic two-hop relaying
protocol~\cite{Poor_Opp} and an opportunistic
routing~\cite{Shin_Opp}, and in cognitive radio networks with
opportunistic scheduling~\cite{bcjung_CR}. Such opportunism can also
be utilized in downlink multi-cell networks through a simple
extension of~\cite{Hassibi_RBF}. In a decentralized manner, it
however remains open how to design a constructive algorithm that can
achieve the multiuser diversity gain in {\em uplink} multi-cell
networks, which are fundamentally different from downlink
environments since for uplink, there exists a mismatch between the
amount of generating interference at each MS and the amount of
interference suffered by each BS from multiple MSs, thus yielding
the difficulty of user scheduling design.

In this paper, we introduce a {\em distributed opportunistic
scheduling (DOS)} protocol, so as to show that full multiuser
diversity gain can indeed be achieved in time-division duplexing
(TDD) uplink $K$-cell networks with time-invariant channel
coefficients. To our knowledge, such an attempt for the network
model has never been conducted before. The channel reciprocity
between up/downlink channels is utilized for every scheduling
period. In the proposed scheme, based on two pre-determined
thresholds, each BS opportunistically selects one MS who has a large
signal strength of the desired channel link among a set of MSs
generating a sufficiently small interference to other BSs, while in
the conventional opportunistic
algorithms~\cite{Knopp_Opp,Viswanath_Opp,Hassibi_RBF}, users with
the maximum signal-to-interference-and-noise ratio (SINR) are
selected for data transmission. Performance is then analyzed in
terms of throughput scaling law. As our main result, it is shown
that the achievable sum-rate scales as $K\log\left(\text{SNR}\log
N\right)$ in a high signal-to-noise ratio (SNR) regime, provided
that $N$ scales faster than $\text{SNR}^{\frac{K-1}{1-\epsilon}}$
for a constant $\epsilon\in(0,1)$. From the result, it is seen that
the proposed scheme achieves the multiuser diversity gain as well as
the degrees-of-freedom gain even under multi-cell environments. An
extension to multi-carrier systems of our achievability result is
also taken into account since multi-carrier modulation is an
attractive choice for dynamic resource allocation as well as
reduction in complexity under frequency-selective fading
environments. To validate the DOS scheme for finite SNR regimes,
computer simulations are performed---a better sum-rate throughput is
provided over conventional schemes. Note that our protocol basically
operates without global channel state information (CSI),
time/frequency expansion, and iteration prior to data transmission,
thereby resulting in an easier implementation. The scheme thus
operates as a decentralized manner which does not involve joint
processing among all communication links.

The rest of this paper is organized as follows.
Section~\ref{SEC:system} describes the system and channel models. In
Section~\ref{SEC:scheduling}, the proposed DOS strategy is
characterized in uplink multi-cell networks.
Section~\ref{SEC:scaling} shows its achievability in terms of
sum-rate scaling. The extension to multi-carrier scenarios is
described in Section~\ref{SEC:Multicarrier}. Numerical evaluation
are shown in Section~\ref{SEC:Sim}. Finally, we summarize the paper
with some concluding remark in Section~\ref{SEC:Conc}.

Throughout this paper, $\mathbb{C}$, $\|\cdot\|$, $\mathbb{E}$, and
$\diag(\cdot)$ indicate the field of complex numbers, $L_2$-norm of
a vector, the statistical expectation, and the vector consisting of
the diagonal elements of a matrix, respectively. Unless otherwise
stated, all logarithms are assumed to be to the base 2.


\section{System and Channel Models} \label{SEC:system}

Consider the interfering multiple-access channel (IMAC) model
in~\cite{Tse_IA}, which is one of multi-cell uplink scenarios, to
describe practical cellular networks. As illustrated in
Fig.~\ref{FIG:system}, there are multiple cells, each of which has
multiple MSs. The example for $K=2$ and $N=3$ is shown in
Fig.~\ref{FIG:system}. Under the model, each BS is interested only
in traffic demands of users in the corresponding cell. We assume
that each node is equipped with a single transmit antenna and each
cell is covered by one BS. The channel in a single-cell can then be
regarded as the MAC. It is then possible to exploit the channel
randomness and thus to obtain the opportunistic gain in multiuser
environments. In this work, we do not assume the use of any
sophisticated multiuser detection schemes at each receiver, thereby
resulting in an easier implementation. Then, a feasible transmission
scenario is that only one user in a cell transmits its data
packet.\footnote{Note that under the model, it is sufficient to
achieve full degrees-of-freedom gain with single user transmission
per cell.}

The term $\beta_{ik}h_{i,u_k}^{(k)} \in \mathbb{C}$ denotes the
channel coefficient between user $u_k$ in the $k$-th cell and BS
$i$, consisting of the large-scale path-loss component
$0<\beta_{ik}\le1$ and the small-scale complex fading component
$h_{i,u_k}^{(k)}$, where $u_k\in\{1,\cdots,N\}$ and
$i,k\in\{1,\cdots,K\}$. For simplicity, we assume that receivers
(MSs) in the same cell experience the same degree of path-loss
attenuation. Especially, when $k=i$, the large-scale term
$\beta_{ik}$ is assumed to be 1 since it corresponds to the
intra-cell received signal strength, which are much stronger than
the inter-cell interference. The channel is assumed to be Rayleigh,
having zero-mean and unit variance, and to be independent across
different $i$, $u_k$, and $k$. We assume a block-fading model, i.e.,
the channels are constant during one block (e.g., frame) and changes
to a new independent value for every block. The received signal
${y}_i\in \mathbb{C}$ at BS $i$ is given by
\begin{eqnarray}\label{eq.receive_vector_BS_i}
{y}_i &=& {h}_{i,u_i}^{(i)}x_{u_i}^{(i)} + \sum_{k=1, k \neq i}^{K}
\beta_{ik}{h}_{i,u_k}^{(k)}x_{u_k}^{(k)} + {z}_i,
\end{eqnarray}
where $x_{u_i}^{(i)}$ is the transmit symbol of user $u_i$ in the
$i$-th cell. The received signal ${y}_i$ at BS $i$ is corrupted by
the independent and identically distributed and circularly symmetric
complex additive white Gaussian noise (AWGN) ${z}_i \in \mathbb{C}$
having zero-mean and variance $N_0$. We assume that each user has an
average transmit power constraint $\mathbb{E}\left[
\left|x_{u_i}^{(i)}\right|^2 \right] \leq P$.



\section{Distributed Opportunistic Scheduling} \label{SEC:scheduling}

In this section, we introduce a DOS algorithm, under which one user
in each cell is selected in the sense of achieving a power gain as
well as generating a sufficiently small interference to other BSs.
The selected MSs then transmit their data simultaneously. Assuming
that the overall procedure of the proposed scheme is performed by
using the channel reciprocity of TDD systems, it is possible for
user $u_i$ in the $i$-th cell to obtain all received channel links
$h_{k,u_i}^{(i)}$ by utilizing a pilot signaling sent from BSs,
where $u_i\in\{1,\cdots,N\}$ and $i,k\in\{1,\cdots,K\}$.

Similarly as in MIMO broadcast channels~\cite{TangHeathChoYun}, an
opportunistic feedback strategy is adopted in order to significantly
reduce the amount of feedback overhead, which does not cause any
performance loss compared to the full feedback scenario. The
objective of our scheduling algorithm is to find a certain user out
of $N$ users in the $i$-th cell satisfying the following two
criteria\footnote{In~\cite{Sadjadpour}, an efficient scheduling
protocol based on two pre-determined thresholds has similarly been
studied in single-cell broadcast channels. }:
\begin{equation}
\left|h_{i,u_i}^{(i)}\right|^2 \ge \eta_{\text{tr}}
\label{EQ:eta_tr}
\end{equation}
and
\begin{equation}
\sum_k\beta_{ki}^2\left|h_{k,u_i}^{(i)}\right|^2\text{SNR} \le
\eta_{I} \label{EQ:eta_I}
\end{equation}
for $i\in\{1,\cdots,K\}$ and $k\in\{1,\cdots,i-1,i+1,\cdots,K\}$,
where $\eta_{\text{tr}}$ and $\eta_{I}$ denote pre-determined
positive thresholds before data transmission. In particular, the
value $\eta_{I}>0$ is set to a small constant independent of $N$, to
assure the cross-channels of the target user that are in deep fade.
Suitable values on $\eta_{\text{tr}}$ and $\eta_{I}$ will be
specified in the later section. After computing the two metrics in
(\ref{EQ:eta_tr}) and (\ref{EQ:eta_I}), representing the signal
strength of the desired channel link and the sum power of $K-1$
generating interference signals to other BSs, respectively, the
users such that the criteria are satisfied request transmission to
their home cell BS $i$. Thereafter, BS $i$ randomly selects the one
among the users who send their requests to the corresponding BS, and
the selected user in each cell starts to transmit its data packets.
As long as the number of users per cell, $N$, scales faster than a
certain value, there is no such event that no user in a certain cell
satisfies the two criteria, which will be analyzed in
Section~\ref{SEC:scaling}.

At the receiver side, each BS detects the signal from its home cell
user, while treating inter-cell interference as noise.


\section{Analysis of Sum-rate Scaling Law} \label{SEC:scaling}

In multi-cell environments, performance on the total throughput is
severely limited due to the inter-cell interference especially in
the high SNR regime. In this section, we show that our DOS protocol
asymptotically achieves full multiuser diversity gain, i.e.,
$\log\log N$ improvement on the sum-rate performance, even at high
SNRs, by deriving an achievable sum-rate scaling law. The
achievability is conditioned by the scaling behavior between the
number of per-cell users, $N$, and the received SNR. That is, we
analyze how $N$ scales with SNR so as to achieve the logarithmic
gain as well as the degrees-of-freedom gain.

Let $R_{u_i}^{(i)}(\text{SNR})$ denote the transmission rate of user
$u_i\in\{1,\cdots,N\}$ in the $i$-th cell ($i=1,\cdots,K$). Assuming
inter-cell interference to be Gaussian, the rate $R_{u_i}^{(i)}$ is
then lower-bounded by
\begin{equation}
R_{u_i}^{(i)}(\text{SNR})\ge
\log\left(1+\text{SINR}_{i,{u_i}}\right), \label{eq:Rui}
\end{equation}
where $\text{SINR}_{i,{u_i}}$ denotes the SINR at BS $i$ from user
${u_i}$'s transmission and is represented by
\begin{align}
\text{SINR}_{i,{u_i}}=\frac{ \left| h_{i,{u_i}}^{(i)}
\right|^2P}{N_0+\sum_{k} \beta_{ik}^2\left| h_{i,u_k}^{(k)}
\right|^2P} \label{eq:SINRiui}
\end{align}
for $k\in\{1,\cdots,i-1,i+1,\cdots,K\}$. Now, we would like to
characterize the distribution of the sum power of $K-1$ inter-cell
interference signals, which is difficult to obtain for a general
class of channel models consisting of both path-loss and fading
components. Instead, for analytical convenience, we use an upper
bound on the amount of inter-cell interference, $I_{i,u_i}$, given
by
\begin{eqnarray}
I_{i,u_i}=\sum_k\left| h_{i,u_k}^{(k)} \right|^2P \nonumber
\end{eqnarray}
due to the fact that $\beta_{ik}\le 1$ for all $k\in\{1,\cdots,K\}$.

We start from the following lemma.

\begin{lemma} \label{lem:converge}
Let $f(x)$ denote a continuous function of $x\in[0,\infty)$. Then,
$\underset{x\rightarrow\infty}\lim \left(1-f(x)\right)^x$ converges
to zero if and only if $\underset{x\rightarrow\infty}\lim xf(x)$
tends to infinity.
\end{lemma}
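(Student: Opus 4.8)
The plan is to recast $(1-f(x))^{x}$ in exponential form and sandwich it between two exponentials of $-x f(x)$, so that whether $(1-f(x))^{x}\to 0$ is decided entirely by whether $x f(x)\to\infty$. Throughout I work in the regime relevant to the application (and implicit in the statement), namely $f(x)\in[0,1)$ for all sufficiently large $x$, so that $(1-f(x))^{x}$ is a well-defined number in $(0,1]$; continuity of $f$ enters only to ensure the quantities involved are well-behaved and that the limit over the continuum can be read off along sequences $x_{n}\to\infty$.

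For the ``if'' direction, assume $x f(x)\to\infty$. The elementary inequality $1-t\le e^{-t}$, valid for every real $t$, gives, since $x\ge 0$,
\[
0\;\le\;(1-f(x))^{x}\;\le\;\bigl(e^{-f(x)}\bigr)^{x}\;=\;e^{-x f(x)} .
\]
Letting $x\to\infty$, the right-hand side tends to $0$ because $x f(x)\to\infty$, hence $(1-f(x))^{x}\to 0$.

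For the ``only if'' direction I would argue by contraposition: suppose $x f(x)$ does \emph{not} tend to infinity. Then there are a constant $M>0$ and a sequence $x_{n}\to\infty$ with $x_{n} f(x_{n})\le M$ for all $n$; in particular $f(x_{n})\le M/x_{n}\to 0$, so $f(x_{n})\le\tfrac12$ for all large $n$. On $[0,\tfrac12]$ one has the complementary bound $1-t\ge e^{-ct}$ with $c=2\ln 2$, since $t\mapsto -\tfrac1t\ln(1-t)$ is increasing on $(0,\tfrac12]$ (it is the average of the increasing function $s\mapsto 1/(1-s)$ over $[0,t]$) and equals $2\ln 2$ at $t=\tfrac12$. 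Therefore, for all large $n$,
\[
(1-f(x_{n}))^{x_{n}}\;\ge\;\bigl(e^{-c f(x_{n})}\bigr)^{x_{n}}\;=\;e^{-c\,x_{n} f(x_{n})}\;\ge\;e^{-cM}\;>\;0 ,
\]
so $\liminf_{x\to\infty}(1-f(x))^{x}\ge e^{-cM}>0$ and $(1-f(x))^{x}$ cannot converge to $0$. This is exactly the contrapositive of ``$(1-f(x))^{x}\to 0\ \Rightarrow\ x f(x)\to\infty$''.

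The only mildly delicate point is this last step: one must first extract the subsequence along which $x f(x)$ stays bounded, observe that this forces $f$ itself to be small there, and only then apply the lower bound $1-t\ge e^{-ct}$ (no single constant $c$ makes $1-t\ge e^{-ct}$ hold on all of $[0,1)$, so restricting to $f(x_{n})\le\tfrac12$ is essential). Everything else is routine. An alternative that avoids explicit constants is to write $(1-f(x))^{x}=\exp\!\bigl(x\ln(1-f(x))\bigr)$ and use $\ln(1-t)=-t+O(t^{2})$ as $t\to 0$; along any subsequence on which $f\to 0$ this gives $x\ln(1-f(x))=-x f(x)\,(1+o(1))$, again reducing the question to the asymptotics of $x f(x)$, with the same subsequence bookkeeping.
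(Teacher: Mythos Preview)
Your argument is correct and follows the same underlying idea as the paper's: in both cases the behavior of $(1-f(x))^{x}$ is controlled entirely by $xf(x)$, via sandwiching against an exponential-type quantity. The execution differs slightly. For the forward direction the paper compares $(1-f(x))^{x}$ with $(1-1/x)^{x}$ using $f(x)=\omega(1/x)$, whereas you use the sharper pointwise bound $1-t\le e^{-t}$; for the converse the paper writes $f(x)<c_{3}/x$ and compares with $(1-c_{3}/x)^{x}\to e^{-c_{3}}$, whereas you use the complementary bound $1-t\ge e^{-ct}$ on $[0,\tfrac12]$. Your contrapositive is in fact more careful than the paper's: you correctly pass to a subsequence along which $x_{n}f(x_{n})\le M$, rather than assuming (as the paper implicitly does) that $xf(x)$ is bounded for \emph{all} $x$; the paper's phrasing ``$\lim xf(x)$ is finite'' does not literally cover the case where the limit fails to exist, which your subsequence argument handles cleanly.
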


The proof of this lemma is presented in Appendix~\ref{PF:converge}.
Since the channel coefficient is Rayleigh, the term
$h_{i,{u_k}}^{(k)}$ is exponentially distributed, and its cumulative
distribution function (CDF) is given by
\begin{align*}
\text{Pr}\left\{ \left| h_{i,u_k}^{(k)} \right|^2 \leq x \right\} &=
1 - e^{-x}~~ \text{for}~ x \geq 0.
\end{align*}
Thus, the term $\sum_{k} \left| h_{i,u_k}^{(k)} \right|^2$,
corresponding to the above upper bound $I_{i,u_i}$, is distributed
according to the chi-square distribution with $2(K-1)$ degrees of
freedom for any $i=1,\cdots,K$ and $u_k=1,2,\cdots,N$. Let $F(x)$
denote the CDF of the chi-square distribution with $2(K-1)$ degrees
of freedom, given by
\begin{align}
F(x) &=  \frac{\gamma{(K-1,x/2)}}{\Gamma(K-1)}, \label{eq:cdf}
\end{align}
where $\Gamma(z) = \int_0^{\infty} t^{z-1}e^{-t}dt$ is the Gamma
function and $\gamma(z,x) = \int_0^{x} t^{z-1}e^{-t}dt$ is the lower
incomplete Gamma function. Then, a lower bound on $F(x)$ is provided
in the following lemma.

\begin{lemma} \label{lem:gammafunc}
For any $0 \leq x < 2$, the CDF $F(x)$ in (\ref{eq:cdf}) is
lower-bounded by
\begin{align}
F(x) \ge c_1 x^{(K-1)}, \label{eq:lemma1}
\end{align}
where
\begin{align}
c_1 &= \frac{e^{-1}2^{-(K-1)}}{(K-1) \cdot \Gamma\left(K-1\right)}
\nonumber
\end{align}
and $\Gamma(\cdot)$ is the Gamma function.
\end{lemma}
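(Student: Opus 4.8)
The plan is to estimate the integrand appearing in the lower incomplete Gamma function from below over the relevant range and then integrate the resulting monomial in closed form. Starting from the definition, $F(x) = \frac{1}{\Gamma(K-1)}\gamma(K-1,x/2) = \frac{1}{\Gamma(K-1)}\int_0^{x/2} t^{K-2} e^{-t}\,dt$. The key point is that the hypothesis $0 \le x < 2$ forces the integration variable to lie in $[0,x/2] \subseteq [0,1)$, where $e^{-t}$ is decreasing and hence bounded below by $e^{-1}$ (in fact by $e^{-x/2}$, but $e^{-1}$ suffices and keeps the bound clean).

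Pulling this constant out of the integral gives $\int_0^{x/2} t^{K-2}e^{-t}\,dt \ge e^{-1}\int_0^{x/2} t^{K-2}\,dt$. The remaining integral is elementary: for $K \ge 2$ the exponent $K-2 \ge 0$, so there is no singularity at the origin, and $\int_0^{x/2} t^{K-2}\,dt = \frac{(x/2)^{K-1}}{K-1} = \frac{x^{K-1}}{2^{K-1}(K-1)}$. Dividing by $\Gamma(K-1)$ then yields $F(x) \ge \frac{e^{-1}2^{-(K-1)}}{(K-1)\,\Gamma(K-1)}\, x^{K-1} = c_1 x^{K-1}$, which is precisely the claimed inequality.

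I do not anticipate any real obstacle: the whole argument is a single monotonicity-plus-integration estimate. The only things to be careful about are that the restriction $x < 2$ is used exactly once --- to obtain the uniform bound $e^{-t} \ge e^{-1}$ on the integration interval --- and that $K \ge 2$ is implicitly in force (the relevant chi-square distribution has $2(K-1)$ degrees of freedom, so $K-1 \ge 1$), which is what makes the monomial integral finite and the constant $c_1$ well defined. One should also sanity-check that the constant coming out of the estimate matches the stated $c_1$ exactly, which it does, so no sharpening is needed.
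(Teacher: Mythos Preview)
Your proof is correct and follows essentially the same idea as the paper: both exploit that $e^{-t}\ge e^{-1}$ on $[0,1)$ to bound $\gamma(K-1,x/2)$ from below by $\tfrac{e^{-1}}{K-1}(x/2)^{K-1}$. The paper reaches this bound via the recursion $\gamma(z,y)=\tfrac{1}{z}y^{z}e^{-y}+\tfrac{1}{z}\gamma(z+1,y)$ and truncation to the first term, whereas you bound the integrand directly and integrate; the execution differs slightly but the argument is the same.
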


The proof of this lemma is presented in Appendix~\ref{PF:gammafunc}.
We are now ready to derive the achievable sum-rate scaling for
uplink $K$-cell networks using the proposed DOS scheme.

\begin{theorem} \label{THM:main}
Suppose that $\eta_{\text{tr}}=\epsilon\log N$ for a constant
$\epsilon\in(0,1)$. Then, the DOS achieves
\begin{equation}
\Theta \left(K\log\text{SNR}(\log N)\right) \label{eq:theorem1}
\end{equation}
sum-rate scaling with high probability (whp) in the high SNR regime
if
$N=\omega\left(\text{SNR}^{\frac{K-1}{1-\epsilon}}\right)$.\footnote{We
use the following notation: i) $f(x)=O(g(x))$ means that there exist
constants $C$ and $c$ such that $f(x)\le Cg(x)$ for all $x>c$. ii)
$f(x)=o(g(x))$ means that
$\underset{x\rightarrow\infty}\lim\frac{f(x)}{g(x)}=0$. iii)
$f(x)=\Omega(g(x))$ if $g(x)=O(f(x))$. iv) $f(x)=\omega(g(x))$ if
$g(x)=o(f(x))$. v) $f(x)=\Theta(g(x))$ if $f(x)=O(g(x))$ and
$g(x)=O(f(x))$~\cite{Knuth}.}
\end{theorem}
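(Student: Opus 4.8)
The plan is to establish the theorem in two parts: first show that, under the stated scaling of $N$, with high probability every cell contains at least one user satisfying both scheduling criteria \eqref{EQ:eta_tr} and \eqref{EQ:eta_I}; and second, show that any such selected user achieves a per-cell rate of order $\log(\text{SNR}\log N)$, so that summing over the $K$ cells gives the claimed $\Theta(K\log\text{SNR}(\log N))$ sum-rate.

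For the first part, I would fix a cell $i$ and compute the probability that a given user $u_i$ is ``admissible,'' i.e. satisfies both $\left|h_{i,u_i}^{(i)}\right|^2\ge\eta_{\text{tr}}$ and $\sum_k\beta_{ki}^2\left|h_{k,u_i}^{(i)}\right|^2\text{SNR}\le\eta_I$. Since the direct-link coefficient $h_{i,u_i}^{(i)}$ is independent of the cross-link coefficients $\{h_{k,u_i}^{(i)}\}_{k\neq i}$, this probability factors. The direct-link factor is $\text{Pr}\{|h_{i,u_i}^{(i)}|^2\ge\epsilon\log N\}=e^{-\epsilon\log N}=N^{-\epsilon}$ (using natural-log/base-$2$ bookkeeping, which only changes constants). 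For the cross-link factor, since $\beta_{ki}\le 1$, the event is implied by $\sum_k\left|h_{k,u_i}^{(i)}\right|^2\le\eta_I/\text{SNR}$, whose probability is $F(2\eta_I/\text{SNR})$ with $F$ the chi-square CDF of \eqref{eq:cdf}; by Lemma~\ref{lem:gammafunc}, for $\text{SNR}$ large enough that $2\eta_I/\text{SNR}<2$, this is at least $c_1(2\eta_I/\text{SNR})^{K-1}=\Theta(\text{SNR}^{-(K-1)})$. Hence each user is admissible with probability at least $c_2 N^{-\epsilon}\text{SNR}^{-(K-1)}$ for some constant $c_2>0$. The number of admissible users in cell $i$ is a binomial random variable with $N$ trials and this success probability, so its mean is at least $c_2 N^{1-\epsilon}\text{SNR}^{-(K-1)}$. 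The hypothesis $N=\omega(\text{SNR}^{(K-1)/(1-\epsilon)})$ is exactly what forces $N^{1-\epsilon}\text{SNR}^{-(K-1)}=N^{1-\epsilon}/\text{SNR}^{K-1}\to\infty$, so by Lemma~\ref{lem:converge} (with $x=N$ and $f$ essentially the per-user failure probability) the probability that cell $i$ has no admissible user, namely $(1-c_2N^{-\epsilon}\text{SNR}^{-(K-1)})^N$, tends to zero. A union bound over the $K$ cells (with $K$ fixed) keeps this failure probability vanishing, so whp every cell schedules some admissible user.

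For the second part, condition on the event that in each cell $i$ an admissible user $u_i$ is selected. By \eqref{eq:Rui} its rate is at least $\log(1+\text{SINR}_{i,u_i})$ with $\text{SINR}_{i,u_i}$ given by \eqref{eq:SINRiui}. In the numerator, admissibility criterion \eqref{EQ:eta_tr} gives $\left|h_{i,u_i}^{(i)}\right|^2P\ge\eta_{\text{tr}}P=\epsilon(\log N)P$, i.e. the desired signal power scales like $\text{SNR}\log N$ (writing $\text{SNR}\propto P/N_0$). For the denominator, the interference term $\sum_k\beta_{ik}^2\left|h_{i,u_k}^{(k)}\right|^2P$ is the aggregate interference that BS $i$ suffers; here the key point is that each interfering user $u_k$ was admitted in cell $k$ precisely because its cross-channels to all other BSs (including BS $i$) satisfy \eqref{EQ:eta_I}, so $\sum_{i\ne k}\beta_{ik}^2\left|h_{i,u_k}^{(k)}\right|^2\text{SNR}\le\eta_I$, a constant independent of $N$. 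Summing the contributions of the $K-1$ interferers bounds the total interference power by $O(\eta_I N_0)$, a constant. Therefore the denominator $N_0+\text{(interference)}=\Theta(N_0)$, and $\text{SINR}_{i,u_i}=\Omega(\text{SNR}\log N)$; a matching $O(\text{SNR}\log N)$ upper bound follows because $|h_{i,u_i}^{(i)}|^2$ has an exponential tail so $\max$ over $N$ users is $O(\log N)$ whp, and the noise floor is $\Theta(N_0)$. Hence $R_{u_i}^{(i)}=\Theta(\log(\text{SNR}\log N))$ whp, and summing over the $K$ cells yields \eqref{eq:theorem1}.

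The main obstacle I anticipate is making the two ``whp'' statements compatible and quantitative at once: the scheduling-feasibility event depends on $N$ through the binomial concentration, while the rate bound additionally needs the direct-link maximum over the admissible set to not be too large and the per-cell interference bound to hold simultaneously across all $K$ cells and all $K-1$ interferers seen by each BS. All of these are fixed-$K$ union bounds over events whose failure probabilities either vanish by Lemma~\ref{lem:converge} or are polynomially small in $N$, so they combine cleanly; the delicate bookkeeping is just keeping the constants ($c_1$, $\eta_I$, the base-$2$ versus natural-log conversion, and the $\Theta(N_0)$ denominator) straight so that the exponent $\frac{K-1}{1-\epsilon}$ in the hypothesis on $N$ comes out exactly, rather than with a loose constant. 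A secondary subtlety is verifying that replacing the true interference (with the $\beta_{ik}$ factors) by the $\beta_{ik}\le 1$ upper bound $I_{i,u_i}$ is only used for the feasibility analysis and does not weaken the achievable rate — which it does not, since for the rate we only need an upper bound on interference and the admission rule already controls the $\beta$-weighted sum directly.
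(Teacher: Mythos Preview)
Your proposal is correct and follows essentially the same two-step structure as the paper's proof: first use Lemmas~\ref{lem:converge} and~\ref{lem:gammafunc} to show that the per-user admission probability $\Theta(N^{-\epsilon}\text{SNR}^{-(K-1)})$ times $N$ diverges exactly under the hypothesis $N=\omega(\text{SNR}^{(K-1)/(1-\epsilon)})$, and then bound each selected user's SINR via the two thresholds. Your per-BS interference bound (each of the $K-1$ interferers contributes at most $\eta_I N_0$) is a slightly more direct variant of the paper's device of pooling all cells' interference into every denominator and bounding the total by $K\eta_I$, and your explicit remark on the matching $O(\cdot)$ direction via the exponential tail of $\max_{u}|h_{i,u}^{(i)}|^2$ is a welcome addition that the paper defers to a separate remark.
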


\begin{proof}
From the fact that
$\sum_k\beta_{ki}^2\left|h_{k,u_i}^{(i)}\right|^2\le
\sum_k\left|h_{k,u_i}^{(i)}\right|^2$ where
$k\in\{1,\cdots,i-1,i+1,\cdots,K\}$, the event that a MS in a cell
satisfies the two criteria (\ref{EQ:eta_tr}) and (\ref{EQ:eta_I})
occurs with probability greater than or equal to $F\left( \eta_I
\text{SNR}^{-1} \right) e^{-\eta_{tr}}$. The probability that such
an event occurs for at least one MS in a cell is then lower-bounded
by
\begin{align}
1-\left(1-F\left(\eta_I\text{SNR}^{-1}\right)e^{-\eta_{\text{tr}}}\right)^N.
\label{eq:probability}
\end{align}
By Lemma~\ref{lem:converge}, (\ref{eq:probability}) converges to 1
as $N$ tends to infinity, if and only if
\begin{align}
\lim_{N \to \infty} N F\left( \eta_I \text{SNR}^{-1} \right)
e^{-\eta_{tr}} \to \infty. \label{eq:condition}
\end{align}
From Lemma~\ref{lem:gammafunc}, the term in (\ref{eq:condition}) can
be lower-bounded by
\begin{align*}
&\lim_{N \to \infty} c_1N \left( \eta_I \text{SNR}^{-1} \right)^{K-1} e^{-\eta_{tr}} \\
&= c_1\eta_I^{K-1}\cdot \lim_{N \to \infty}
\frac{N}{\text{SNR}^{K-1}} e^{-\epsilon\log N} \\ &=
c_1\eta_I^{K-1}\cdot \lim_{N\to \infty} \frac{N^{1-\epsilon}
}{\text{SNR}^{K-1}},
\end{align*}
which increases with $N$ (or equivalently SNR) as $N$ scales faster
than $\text{SNR}^{\frac{K-1}{1-\epsilon}}$. Hence, for each cell,
there exists at least one MS satisfying (\ref{EQ:eta_tr}) and
(\ref{EQ:eta_I}) whp. From (\ref{eq:Rui}) and (\ref{eq:SINRiui}), a
lower bound on the achievable sum-rate is finally given by
\begin{align}
&\sum_{i=1}^K R_{u_i}^{(i)}(\text{SNR})  \nonumber \\
&\ge \sum_{i=1}^K\log\left(1+\frac{\left| h_{i,{u_i}}^{(i)}
\right|^2P}{N_0+\displaystyle
\sum_{\substack{k\in\{1,\cdots,i-1,\\i+1,\cdots,K\}}}\beta_{ik}^2\left|h_{i,u_k}^{(k)}\right|^2P}\right)
\nonumber \\ &\ge \sum_{i=1}^K\log\left(1+\frac{\left|
h_{i,{u_i}}^{(i)} \right|^2P}{N_0+\displaystyle \sum_{i=1}^K
\displaystyle
\sum_{\substack{k\in\{1,\cdots,i-1,\\i+1,\cdots,K\}}}\beta_{ik}^2\left|h_{i,u_k}^{(k)}\right|^2P}\right)
\nonumber \\ &\ge K\log\left(1+\frac{\eta_{\text{tr}}\text{SNR}}{1+K
\eta_I}\right) \nonumber \\ & \ge K\log\left(1+\epsilon c_2(\log
N)\text{SNR}\right), \label{eq:sumratefinal}
\end{align}
which scales as $K\log\text{SNR}(\log N)$, under the condition
$N=\omega\left(\text{SNR}^{\frac{K-1}{1-\epsilon}}\right)$, where
$c_2>0$ is a constant. This completes the proof of this theorem.
\end{proof}

Note that the logarithmic term in (\ref{eq:theorem1}) is due to the
multiuser diversity gain of the DOS. From the above theorem, the
following interesting observation is made according to parameters
$\epsilon$ and $N$.

\begin{remark}
As $\epsilon$ increases, the achievable sum-rate in
(\ref{eq:sumratefinal}) gets improved due to the increased received
SNR, even if scaling laws do not fundamentally change. On the other
hand, the minimum number of per-cell MSs, $N$, required to guarantee
the achievability, also needs to scale faster. Thus, a suitable
selection for the threshold $\eta_{\text{tr}}=\epsilon\log N$ should
be performed according to given network environments.
\end{remark}

In addition, it would be worthy to show our result at finite SNRs
that are practical operating regimes.

\begin{remark} As in the high SNR case, suppose
that $\eta_{\text{tr}}=\epsilon\log N$ for a constant
$\epsilon\in(0,1)$. In the finite SNR regime, independent of $N$,
the DOS then achieves
\begin{align}
\Theta \left(K\log\log N\right) \nonumber
\end{align}
sum-rate scaling whp if $N=\omega(1)$. This is because $\Theta
(K\log\text{SNR}(\log N))=\Theta (K(\log\text{SNR}$ $+\log\log
N))=\Theta (K\log\log N)$ (the detailed step is omitted here since
the proof essentially follows that of Theorem~\ref{THM:main}).
\end{remark}

For comparison, we now show an upper limit on the sum-rate in uplink
$K$-cell networks.

\begin{remark}
From a genie-aided removal of all the inter-cell interference, we
obtain $K$ parallel MAC systems. Under the basic assumption that
only one MS per cell transmits its data, the throughput scaling for
each MAC is thus upper-bounded by $O(\log \text{SNR}(\log N))$
(see~\cite{Knopp_Opp}). Hence, it is seen that this upper bound on
the sum-rate scaling, $K\log \text{SNR}(\log N)$, matches our lower
bound in~(\ref{eq:theorem1}) that is achieved using our distributed
algorithm based on only local CSI at each node.
\end{remark}


\section{Extension to Multi-carrier Systems} \label{SEC:Multicarrier}

The DOS algorithm can easily be implemented in multi-carrier
systems. From the fact that our work is conducted under the
block-fading model, a natural way is to apply the proposed scheme to
orthogonal frequency subchannels, each of which experiences
relatively flat fading. We focus on a certain subchannel. Let
$\beta_{ik}{\bf
h}_{i,u_k}^{(k)}(n)\in\mathbb{C}^{N_{\text{sub}}\times 1}$ denote
the frequency response for the $n$-th subchannel of the uplink
channel from user $u_k$ in the $k$-th cell to BS $i$, whose elements
are assumed to be the same. Here, $N_{\text{sub}}$ indicates the
number of subcarriers in one subchannel, which has no need for
tending to infinity, $u_k\in\{1,\cdots,N\}$ and
$i,k\in\{1,\cdots,K\}$. Under the multi-carrier model, the DOS
scheme and its performance analysis almost follow the same steps as
those shown in Sections in~\ref{SEC:scheduling}
and~\ref{SEC:scaling}, respectively. The users such that the two
criteria, expressed as
$\left\|\mathbf{h}_{i,u_i}^{(i)}(n)\right\|^2\ge \eta_{\text{tr}}$
and
$\sum_k\beta_{ki}^2\left\|\mathbf{h}_{k,u_i}^{(i)}(n)\right\|^2\text{SNR}\le
\eta_{I}$, are satisfied request transmission to their home cell BS
$i$. Thereafter, BS $i$ randomly selects the one among the users who
send their requests.


\section{Numerical Evaluation} \label{SEC:Sim}

In this section, we perform computer simulation to validate the
performance of our DOS scheme for finite parameters $N$ and SNR in
uplink multi-cell environments. Now, we slightly modify our protocol
so that it is suitable for numerical evaluation. To be specific,
among the MSs satisfying the criterion (\ref{EQ:eta_I}) for a given
$\eta_I$, the one with the maximum signal strength of the desired
channel link is selected for each cell. Assuming less $\eta_I$
reduces the inter-cell interference, but corresponds to a smaller
multiuser diversity gain. On the other hand, the greater $\eta_I$ we
have, the more multiuser diversity gain it may enable to capture at
the cost of increased interference. It is thus not clear whether
having larger $\eta_I$ is beneficial or not in terms of sum-rate
improvement. Hence, for given parameters $K$ and $N$, the value
$\eta_I$ needs to be carefully chosen for better sum-rate
performance. Note that the optimal $\eta_I$ can be numerically
decided prior to data transmission and the DOS scheme operates with
the optimal parameter.\footnote{Even when parameters $K$ and $N$ are
time-variant for every transmission block, the optimal $\eta_I$ can
also be updated at each BS in a decentralized manner according to
the lookup table, shown in Table~\ref{TABLE}, and thus our DOS
scheme performs well without any performance loss.} In our
simulation, the channels in (\ref{eq.receive_vector_BS_i}) are
generated $1\times 10^5$ times for each system parameter.

First, we show numerical results by simply assuming no large-scale
path-loss component, i.e., $\beta_{ik}=1$ for
$i,k\in\{1,\cdots,K\}$. In Fig.~\ref{FIG:simulation}, the average
achievable rates per cell of the proposed scheme are evaluated
according to received SNRs (in dB scale) and are compared with those
of the following two scheduling methods: the users having the
maximum SNR value and the minimum amount of generating interference
are selected for data transmission (we represent them with {\em
MaxSNR} (maximum SNR) and {\em MinGI} (minimum generating
interference), respectively, in the figure). More specifically, the
MinGI scheme operates in the sense that BS $i\in\{1,\cdots,K\}$
selects one user such that the value
$\sum_k\beta_{ki}^2\left|h_{k,u_i}^{(i)}\right|^2\text{SNR}$
($k\in\{1,\cdots,i-1,i+1,\cdots,K\}$), shown in (\ref{EQ:eta_I}), is
minimized. As an example, the simulation environments are $K=3$ and
$N=100$. The optimal $\eta_I$ is then given by 0.5. It is shown that
the DOS scheme outperforms the conventional ones for all the SNR
regimes. It is also examined how efficiently we decide the threshold
$\eta_I$ in terms of maximizing the sum-rate for various system
parameters. The optimal value of $\eta_I$ is summarized in
Table~\ref{TABLE}. Note that given parameters $K$ and $N$, the
optimal $\eta_I$ is uniquely determined regardless of the received
SNR.

Second, to show the outstanding performance of the DOS scheme under
practical cellular environments, we run the system-level simulation
for the case where large-scale path-loss and shadowing components
are incorporated into our channel model. The simulation methodology
of the Third-Generation Partnership Project 2 (3GPP2)~\cite{3gpp2}
is employed with a slight modification to construct a multi-cell
environment. A cell is formed as a hexagon whose radius is 500 m.
The cell is piled up in the nearest outer 6 cells (i.e., $K=7$ is
assumed). The users are randomly distributed in a uniform manner.
More specific system parameters are listed in
Table~\ref{system_parameters}. Now, the average achievable rates per
cell of our scheme are evaluated according to transmit powers (in
dBm scale) and are then compared with those of the other two
methods, MaxSNR and MinGI. In this case, as illustrated in
Fig.~\ref{FIG:simulation2}, the MaxSNR scheme has much higher rates
than those of MinGI since the users having the maximum SNR value are
commonly located at cell center regions because of the large-scale
fading effect, thus leading to a sufficiently small amount of
inter-cell interference as well. It is also shown that the proposed
DOS still outperforms the conventional schemes over all the transmit
powers.


\section{Conclusion} \label{SEC:Conc}

The low-complexity DOS protocol was proposed in uplink $K$-cell
networks, where the global CSI, dimension extension, parameter
adjustment through iteration, and multiuser detection are not
required. The achievable sum-rate scaling was then analyzed---the
DOS scheme asymptotically achieves $\Theta\left(K\log\left(\log
N\right)\right)$ throughput scaling as long as $N$ scales faster
than $\text{SNR}^{\frac{K-1}{1-\epsilon}}$ for a constant
$\epsilon\in(0,1)$. It thus turned out that both degrees-of-freedom
and multiuser diversity gains are obtained even under multi-cell
environments. Simulation results showed that the proposed DOS
outperforms two conventional schemes in terms of sum-rate. The
optimal threshold regarding the amount of generating interference
was also examined for various system parameters under no large-scale
fading assumption.


\appendix

\section{Appendix}

\subsection{Proof of Lemma~\ref{lem:converge}} \label{PF:converge}

If $\underset{x\rightarrow\infty}\lim xf(x) \rightarrow\infty$, then
it follows that $f(x) = \omega \left( \frac{1}{x} \right)$, thus
resulting in
\begin{align}
\lim_{x \to \infty} \left( 1-f(x) \right)^x &= o\left(\lim_{x \to
\infty} \left( 1-\frac{1}{x} \right)^x\right) \nonumber\\ & = o(1).
\nonumber
\end{align}
It is hence seen that $\underset{x\rightarrow\infty}\lim \left(
1-f(x) \right)^x$ converges to zero. If
$\underset{x\rightarrow\infty}\lim xf(x)$ is finite, then there
exists a constant $c_3>0$ such that $xf(x) < c_3$ for any $x \geq
0$. We then have
\begin{align}
\lim_{x \to \infty} \left( 1-f(x) \right)^x & > \lim_{x \to \infty}
\left( 1-\frac{c_3}{x} \right)^x \nonumber\\ & = e^{-c_3} > 0,
\nonumber
\end{align}
which complete the proof.


\subsection{Proof of Lemma~\ref{lem:gammafunc}} \label{PF:gammafunc}

The lower incomplete Gamma function satisfies the inequality
$\gamma(z,x) \geq \frac{1}{z} x^{z} e^{-x}$ for $z>0$ and $0 \leq x
< 1$ since
\begin{align}
\gamma(z,x) &= \frac{1}{z} x^{z} e^{-x} + \frac{1}{z} \gamma(z+1,x) \nonumber \\
&= \frac{1}{z} x^{z} e^{-x} + \frac{1}{z(z+1)} x^{{z+1}} e^{-x} +
\cdots \nonumber \\
&\geq \frac{1}{z} x^{z} e^{-1}. \nonumber
\end{align}
Applying the above bound to (\ref{eq:cdf}), we finally obtain
(\ref{eq:lemma1}), which completes the proof.


\newpage


\begin{figure}[t!]
  \begin{center}
  \leavevmode \epsfxsize=0.4\textwidth   
  \leavevmode 
  \epsffile{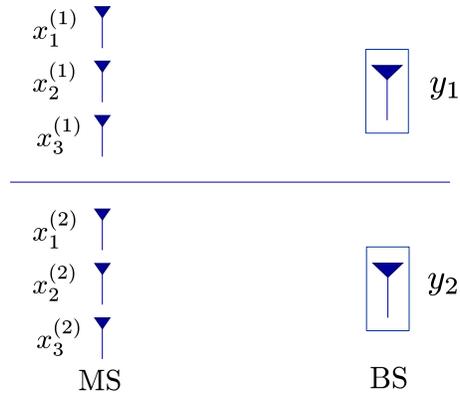}
  \caption{The IMAC model with $K$=2 and $N=3$.}
  \label{FIG:system}
  \end{center}
\end{figure}

\begin{figure}[t!]
  \begin{center}
  \leavevmode \epsfxsize=0.5\textwidth   
  \leavevmode 
  \epsffile{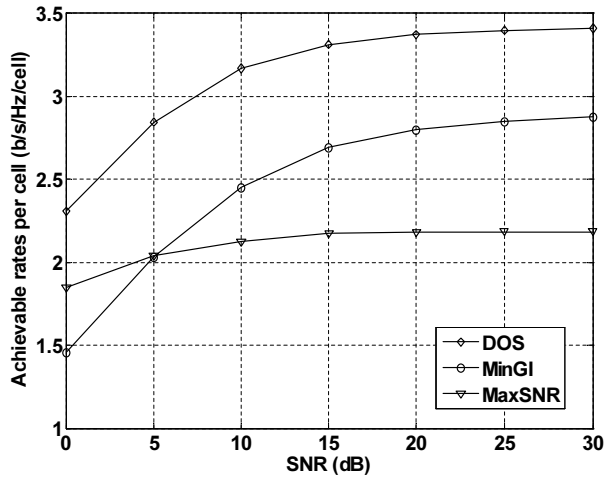}
  \caption{The achievable rates per cell with respect to SNR, where $\beta_{ik}=1$ is assumed. The system with $K=3$ and $N=100$ is considered.}
  \label{FIG:simulation}
  \end{center}
\end{figure}

\begin{figure}[t!]
  \begin{center}
  \leavevmode \epsfxsize=0.5\textwidth   
  \leavevmode 
  \epsffile{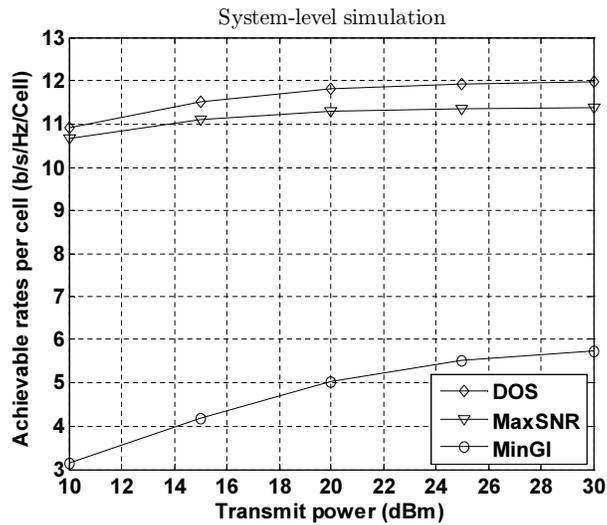}
  \caption{The achievable rates per cell with respect to transmit power via system-level simulation. The system with $K=7$ and $N=500$ is
  considered.}
  \label{FIG:simulation2}
  \end{center}
\end{figure}

\begin{table}[t!]
\begin{center}
\caption{Optimal value of $\eta_I$ for various system parameters}
 \begin{tabular}{|c||c|c|c|c|c|}
   \hline
     & $N=50$ & $N=100$ & $N=200$ & $N=300$ & $N=1000$\\
 \hline
 \hline
    $K=3$ & 0.7 & 0.5 & 0.4 & 0.3 & 0.2    \\
 \hline
   $K=4$ & 1.5 & 1.3 & 1.1 & 0.9 & 0.6    \\
 \hline
   $K=5$ & 2.0 & 1.8 & 1.8 & 1.7 & 1.5    \\
   \hline
\end{tabular}
\label{TABLE}
\end{center}
\end{table}

\begin{table}[t!]
\begin{center}
\caption{Multi-cell Environments~\cite{3gpp2}} \vspace{0.3cm}
\begin{tabular}{|c|c|}
 \hline
 Parameter           & Values   \\%
\hline
\hline Path-loss exponent    & $3$      \\ %
\hline Shadowing STD          & $8$ dB   \\ %
\hline Radius of cell      & $500$ m       \\ %
\hline Structure of cell         & Hexagon   \\ %
\hline The number of MSs per cell ($N$)   & $500$   \\ %
\hline User distribution & Uniform distribution\\ & on the two-dimensional network   \\ %
\hline
\end{tabular}\vspace{-0.1cm}
\label{system_parameters}
\end{center}
\end{table}

\end{document}